% This is samplepaper.tex, a sample chapter demonstrating the
% LLNCS macro package for Springer Computer Science proceedings;
% Version 2.20 of 2017/10/04
%
\documentclass[runningheads]{llncs}
\usepackage{graphicx}
% Used for displaying a sample figure. If possible, figure files should
% be included in EPS format.
\usepackage{amssymb}
\usepackage{amsmath}
\usepackage{algorithm}
\usepackage[noend]{algorithmic}

\allowdisplaybreaks
%
% If you use the hyperref package, please uncomment the following line
% to display URLs in blue roman font according to Springer's eBook style:
% \renewcommand\UrlFont{\color{blue}\rmfamily}

\begin{document}
\title{Minimum-weight partitioning of a set with associated subsets%\thanks{Supported by organization x.}
}
%
%\titlerunning{}
% If the paper title is too long for the running head, you can set
% an abbreviated paper title here
%
\author{Yakov Zinder\inst{1}\orcidID{0000-0003-2024-8129} 
\and
Bertrand M.T. Lin\inst{2}\orcidID{0000-0003-0456-296X}
\and
Joanna Berli\'{n}ska\inst{3}\orcidID{0000-0003-2120-2595} 
}
\authorrunning{Y. Zinder et al.}
% First names are abbreviated in the running head.
% If there are more than two authors, 'et al.' is used.
%
\institute{School of Mathematical and Physical Sciences, University of Technology Sydney, Australia \\
\email{Yakov.Zinder@uts.edu.au}
\and
Institute of Information Management, National Yang Ming Chiao Tung University, Hsinchu, Taiwan\\
\email{bmtlin@nycu.edu.tw}
\and
Faculty of Mathematics and Computer Science, Adam Mickiewicz University, Pozna\'n, Poland\\
\email{Joanna.Berlinska@amu.edu.pl} }
\maketitle              % typeset the header of the contribution
\begin{abstract}
The paper presents complexity results and performance guaranties for a family of approximation algorithms for an optimisation problem arising in software testing and manufacturing. The problem is formulated as a partitioning of a set where each element has an associated subset in another set, but  can also be viewed as a scheduling problem with infinitely large communication delay, precedence constraints in the form of a bipartite graph, and duplication. 

\keywords{ Computational complexity \and Approximation algorithms \and Scheduling with communication delay \and Duplication}
\end{abstract}
\section{Introduction}

Consider two finite non-empty sets $N$ and $M$ such that $N \cap M = \emptyset$ and each $i \in N\cup M$ has a positive integer weight $p_i$. Let each $i \in N$ be associated with some subset $M(i)$ of $M$ in such a way that
$\cup_{i \in N} M(i) = M$, and let $m$ be a positive integer, satisfying $m < |N|$. The considered problem requires to find a partition of $N$ into $m$ non-empty subsets $N^1$, ..., $N^m$ with the smallest  
\begin{equation}\label{f()}
 f(N^1, ..., N^m) = \max_{1 \le e \le m} \left( \sum_{i \in N^e}p_i + \sum_{j \in \cup_{i \in N^e} M(i)}p_j \right).
\end{equation}
In what follows, this problem will be referred to as Minimum-Weight Partitioning of a Set with Associated Subsets (MWPSAS).

Originally, the formulation and study of MWPSAS were triggered by the collaboration with a company that needed to assign a set $N$ of certain computer programs to $m < |N|$ identical workstations for the purpose of software testing. Each $i \in N$ is assigned to a workstation together with a set $M(i)$ of certain associated operations such as the installation of special libraries. Each $g \in N \cup \left(\cup_{i \in N} M(i)\right )$ consumes $p_g$ units of time, but if $i \in N$ and $j \in N$ are assigned to the same workstation and $M(i) \cap M(j) \neq \emptyset$, then the operations constituting $M(i) \cap M(j)$ need to be performed only once. It was necessary to find an assignment of programs to workstations (the partition $N^1, ..., N^m$ in the formulation above) with the smallest time needed for finishing the entire testing. A similar situation arises in manufacturing, when jobs require several tools, and a tool that is installed for processing one job can be used for processing any other job on the same machining center.

The study was also motivated by the obvious relation between MWPSAS and scheduling with communication delay -- an area of scheduling theory which has remained a subject of intensive research for several decades since the pioneer publications   
\cite{rayward1987uet} and \cite{papadimitriou1990towards}. For the purpose of this discussion, the problem of scheduling with communication delay can be stated as follows. A set of jobs $J$ is to be processed on $m > 1$ parallel identical machines subject to precedence constraints in the form of antisymmetric, antireflexive, and transitive partial binary relation on $J$. If $i \in J$ precedes $j\in J$ in this relation, denoted $i \rightarrow j$, then the processing of $j$ can commence only after the completion of $i$. Furthermore, if $i \rightarrow j$ and these jobs are processed on different machines, then the processing of $j$ can commence not earlier than $c$ time units after the completion of $i$. The constant $c$ is referred to as communication delay. Each job $i \in J$ can be processed by any machine and its processing time $p_i$  remains the same regardless of the machine to which this job is allocated. Each machine can process not more than one job at a time. It is necessary to minimise the time needed to complete all jobs in $J$, commonly denoted by $C_{max}$ and referred to as the makespan. 

Using the three-field notation, adopted in scheduling theory, this problem can be denoted $P|prec, c_{ij}=c|C_{max}$. If all instances of this scheduling problem have the same number of machines, then the first field changes from $P$, indicating that the number of parallel identical machines is part of the input and may vary from instance to instance, to $Pm$, indicating that the number of machines is the same for all instances. Often the partially ordered set of jobs is considered as an acyclic directed graph where $J$ is the set of nodes and there is an arc from $i$ to $j$ if and only if $i \rightarrow j$. If all instances have the same type of precedence constraints, then $prec$ in the second field changes respectively, for example, to $bipartite$ when only bipartite graphs are considered. Similarly, $p_i = 1$ and $c_{ij}=\infty$ indicate that all jobs in $J$ require one unit of machine time and that the communication delay is infinitely large, respectively. One of the techniques aimed at reducing the makespan by avoiding communication delay is duplication, which is the permission to transform the original acyclic directed graph into a new one by replacing some nodes by their several exact copies, i.e. if node $i$ is replaced by nodes $i_1, ..., i_k$, then for each $1 \le e \le k$, $p_{i_e} = p_i$ and node $i_e$ precedes or succeeds any node $j$ if and only if it is true for $i$. If duplication is allowed, this is indicated in the second field of the three-field notation by $dup$.   

It is easy to see that MWPSAS is equivalent to $P|bipartite, dup, c_{ij}=\infty|C_{\max}$ with the set of jobs $N\cup M$ and where $j \rightarrow i$ if and only if $i \in N$ and $j \in M(i)$. In terms of scheduling with communication delay, this paper contributes the following results: (a) a proof of the NP-hardness in the strong sense of $P2|bipartite, dup,$ $c_{ij}=\infty, p_i=1|C_{\max}$; (b) a proof of the NP-hardness in the strong sense of $P|2in\textrm{-}tree,$ $dup, c_{ij}=\infty, p_i=1|C_{\max}$ where $2in\textrm{-}tree$ indicates that the directed acyclic graph is a collection of in-trees in which any path cannot contain more than two nodes; (c) a proof of the NP-hardness in the strong sense of $P|2out\textrm{-}tree, dup, c_{ij}=\infty, p_i=1|C_{\max}$ where $2out\textrm{-}tree$ indicates that the directed acyclic graph is a collection of out-trees in which any path cannot contain more than two nodes; (d) a performance guaranty (an upper bound on the deviation from the optimal makespan) for a broad class of approximation algorithms for $P|bipartite, dup, c_{ij}=\infty|C_{\max}$ which contains an algorithm with the best possible performance guaranty for $P|2out\textrm{-}tree, dup, c_{ij}=\infty, p_i=1|C_{\max}$.

\section{Computational complexity}
An instance of the decision version of MWPSAS requires an answer to the question: for a given positive integer $C$, does there exist a partition of $N$ into $m$ non-empty subsets $N^1$, ..., $N^m$ for which $f(N^1, ..., N^m)$ does not exceed $C$.

The proof below is a reduction from the Clique Problem (CLIQUE) that is NP-complete in the strong sense \cite{garey1979computers} and can be stated as follows: 

\vspace{0.2cm}
\parbox{12cm}{
CLIQUE Given an integer $k > 1$ and a graph $G(V,W)$, where $V$ is the set of nodes and  $W$ is the set of edges, such that
\begin{equation}\label{|W|<=}
   k < |V| \hspace{0.3cm}\mbox{and}\hspace{0.3cm}  \frac{k^2-k}{2} < |W| .
\end{equation}
Does $G(V,W)$ contain $k$ nodes that induce a subgraph where any two nodes are linked by an edge (such a subgraph is called complete or a $k$-clique)?
}

\noindent Observe that the largest number of edges in a graph with $k$ nodes is 
$
 \displaystyle\frac{k^2-k}{2},
$
and a graph with $k$ nodes has this number of edges if and only if it is complete (is a $k$-clique). 
\begin{theorem}\label{theorem: m=2}
 MWPSAS is NP-hard in the strong sense even when $m=2$ and $p_i=1$ for all $i \in N\cup M$.
\end{theorem}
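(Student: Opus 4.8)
The plan is to reduce from CLIQUE, which the excerpt has just recalled is strongly NP-complete, producing in polynomial time an instance of the decision version of MWPSAS with $m=2$ and all weights equal to one. Given $G(V,W)$ and $k$, I would let the \emph{elements to be partitioned encode the vertices} and the \emph{associated subsets encode the edges}: take $N=V$, take $M=\{t_e : e\in W\}$, and set $M(v)=\{t_e : v\in e\}$, so that each vertex is associated with the tokens of its incident edges; write $q=|W|$ and let $\mu(S)$ denote the number of edges of $G$ with both endpoints in $S$. With unit weights the cost of a part $N^1$ equals $|N^1|$ plus the number of distinct incident tokens it collects, and since a token $t_e$ is missed by $N^1$ exactly when both endpoints of $e$ lie in $N^2$, this cost is $|N^1|+q-\mu(N^2)$. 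Hence the objective takes the clean form $f(N^1,N^2)=\max\{\,|N^1|+q-\mu(N^2),\ |N^2|+q-\mu(N^1)\,\}$, so that only the internal density of the two sides matters.

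The forward direction is then immediate once one side is allowed to act as a ``free'' complement. If $S$ is a $k$-clique I would set $N^1=S$ and $N^2=V\setminus S$: since $\mu(N^1)=\binom{k}{2}$, the second part costs exactly $C:=q+(|N|-k)-\binom{k}{2}$, while the first part costs at most $|N^1|+q=k+q$, which is $\le C$ provided $|N|$ is large enough (specifically $|N|\ge 2k+\binom{k}{2}$). To guarantee this I would first pad $G$ with isolated vertices, which cannot belong to any $k$-clique for $k>1$ and so leave the CLIQUE answer unchanged; this is polynomial because $k<|V|$. For soundness I would start from a partition with $f(N^1,N^2)\le C$ and rearrange the bound on the second part to $\mu(N^1)\ge \binom{k}{2}+k-|N^1|$, i.e.\ $|N^1|+\mu(N^1)\ge\binom{k+1}{2}$, and symmetrically for $N^2$. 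Invoking the extremal fact recalled in the excerpt, $\mu(S)\le\binom{|S|}{2}$ with equality iff $S$ induces a complete subgraph, the relation $|N^1|+\mu(N^1)\ge\binom{k+1}{2}$ forces $|N^1|\ge k$, and at $|N^1|=k$ it forces $\mu(N^1)=\binom{k}{2}$, i.e.\ a $k$-clique.

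The hard part, and where I expect the real work to lie, is that this cost identity supplies only the \emph{lower} bounds $|N^e|+\mu(N^e)\ge\binom{k+1}{2}$, which are also satisfied by larger, low-density sets; so I must additionally \emph{cap} the size of the clique-carrying side at $k$. The natural tension is that freeing the complement pushes $|N|$ to be large, whereas a pigeonhole cap on the selected side (so that one part has at most $k$ vertices) would want $|N|\le 2k+1$, and these two requirements conflict in the purely symmetric encoding above. I therefore expect to break the symmetry with a small auxiliary budget gadget: extra unit-weight elements whose sole effect is to make any part exceeding $k$ vertices overshoot $C$ through its own incident-token count, while leaving one designated side free to absorb all remaining tokens at a fixed cost. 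Once the threshold $C$ and this gadget are tuned so that the two cost bounds can hold simultaneously only when one side is exactly a $k$-clique, the equivalence is complete, and I would finish by translating the construction into the scheduling statement $P2|bipartite,dup,c_{ij}=\infty,p_i=1|C_{\max}$ through the correspondence $j\rightarrow i\iff i\in N,\ j\in M(i)$ noted just before the theorem.
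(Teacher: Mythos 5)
Your reduction has a genuine gap, and in fact the construction you describe is unsound as stated. With $N=V$, $M$ the edge tokens, and $M(v)$ the set of incident edges, the condition $f(N^1,N^2)\le C$ with $C=q+|N|-k-\binom{k}{2}$ is, by your own cost identity, \emph{equivalent} to the pair of lower bounds $|N^e|+\mu(N^e)\ge\binom{k+1}{2}$ for $e=1,2$, and nothing more. These bounds are monotone in $|N^e|$: once you pad $G$ with isolated vertices so that the completeness direction works (and you do need $|N|\ge 2k+\binom{k}{2}$ for it, since CLIQUE instances with $|V|$ close to $k$ exist), the padded instance \emph{always} admits a feasible partition, clique or no clique --- simply split the isolated vertices so that both sides contain at least $\binom{k+1}{2}$ elements; then both bounds hold with $\mu(N^1)=\mu(N^2)=0$. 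So the padding that completeness requires destroys soundness. You correctly identify this failure mode (``larger, low-density sets'' also satisfy the bounds) and propose to repair it with a ``budget gadget,'' but that gadget is precisely the crux of the theorem and is never constructed. Worse, the tension you note is not obviously resolvable inside your symmetric, vertex-based encoding: each part's cost is \emph{discounted} by the density of the other part, so any slack added for completeness is equally available to sparse, balanced partitions.

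The paper's proof resolves exactly this difficulty by using the dual encoding plus explicit asymmetric gadgets: the partitioned set is $N=W\cup T\cup T_0$ (the \emph{edges}, plus dummies), and an edge's associated subset is its two endpoints, so the clique-carrying side pays for the vertices it covers rather than being credited for edges it misses. The singleton $T_0$ with $M(T_0)=S$ (a large dummy block of size $n+|W|-\binom{k}{2}-1$) and the block $T$ of $\frac{k^2+k}{2}$ elements with $M(a)=V$ force, by counting against $C=n+|W|+k$, that $T_0\subseteq N^1$, $T\subseteq N^2$, and $V\subseteq\cup_{i\in N^2}M(i)$. The same threshold then yields simultaneously $|W\cap N^1|\ge\binom{k}{2}$ (else $N^2$ overflows) and $\left|\cup_{i\in N^1}M(i)\cap V\right|\le k$ (else $N^1$ overflows), and the extremal fact $|W\cap N^1|\le\binom{\left|\cup_{i\in N^1}M(i)\cap V\right|}{2}$ pins both inequalities to equality, i.e.\ the edges in $N^1$ form a $k$-clique. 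To salvage your encoding you would need gadgets playing these same two roles --- one absorbing the spare weight so that only one side can be tight, and one capping the size of that side at $k$ --- and that is exactly the work your proposal defers.
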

\begin{proof}
 Let an integer $k > 1$ and a graph $G(V,W)$ be an instance of the CLIQUE problem, where $V=\{v_1, ..., v_n\}$ is the set of nodes and $W$ is the set of edges. The edge that links nodes $v_i$ and $v_j$ will be denoted by $\{v_i, v_j\}$. The corresponding instance of the decision version of MWPSAS is constructed as follows. The set $N$ is a union of three disjoint sets $W$, $T$, and $T_0$, where
 \begin{equation}\label{|T|}
  |T| = \frac{k^2+k}{2}\hspace{0.3cm}\mbox{ and }\hspace{0.3cm} |T_0| = 1;
 \end{equation}
 the set $M$ is a union of two disjoint sets $V$ and $S$, where
\begin{equation}\label{|S|}
  |S| = n + |W| - \frac{k^2-k}{2} - 1;
 \end{equation}
for each $a \in N$,
\begin{equation}\label{N(a)}
 M(a) = \left\{
 		\begin{array}{cl}
			S & \mbox{ if }  \{a\} = T_0 \\
			\{v\} \cup \{u\} & \mbox{ if } a = \{v, u\} \in W\\
			V & \mbox{ if } a \in T
		\end{array}
 	  \right. ;
\end{equation}
and finally
 \[ 
  C =  n + |W| + k, \hspace{0.3cm} m = 2, \hspace{0.3cm}\mbox{ and }\hspace{0.3cm} p_i =1\hspace{0.3cm}\mbox{ for all }\hspace{0.3cm} i\in N\cup M.  
 \]

Suppose that there exists a partition of $N$ into 2 non-empty subsets $N^1$ and $N^2$ such that  
\begin{equation}\label{f<C}
 f(N^1,  N^2) \le C = n + |W| + k.
\end{equation}
Without loss of generality assume that $T_0 \subseteq N^1$. Then, by virtue of (\ref{N(a)}),  $S \subseteq \cup_{i \in N^1} M(i)$.  Furthermore, this assumption leads to the inequality 
\begin{equation}\label{TcapN2}
 T \cap N^2 \neq \emptyset,
\end{equation}
 because otherwise $T \subset N^1$, which,  by virtue of (\ref{N(a)}), implies $V  \subset \cup_{i \in N^1} M(i)$, and consequently,  
\[
 f(N^1, N^2) \ge \sum_{i \in N^1}p_i + \sum_{j \in \cup_{i \in N^1} M(i)}p_j \ge  |T_0| + |T| + |S| + |V| 
\]
\[ 
 = 1 + \frac{k^2+k}{2} + n + |W| - \frac{k^2-k}{2} -1 + n > C,
\]
which contradicts (\ref{f<C}). This, in turn, implies that $T \cap N^1 = \emptyset$, 
because otherwise, by virtue of (\ref{N(a)}) and (\ref{TcapN2}),  
\[
 V  \subset \cup_{i \in N^1} M(i) \hspace{0.3cm}\mbox{ and }\hspace{0.3cm} V  \subseteq \cup_{i \in N^2} M(i),
\] 
and consequently, taking into account  (\ref{|W|<=}),
\[
 f(N^1, N^2)  \ge \frac{2|V| + |S| + |W| + |T| +|T_0|}{2}
\]
\[
 = n +  \frac{n}{2} + \frac{|W|}{2} - \frac{k^2-k}{4}  - \frac{1}{2} + \frac{|W|}{2} + \frac{k^2 + k}{4} + \frac{1}{2} > C,
\]
which contradicts (\ref{f<C}). 

Summarising the above, 
\begin{equation}\label{sum}
  T_0 \subseteq N^1, \hspace{0.3cm} S \subseteq \cup_{i \in N^1} M(i), \hspace{0.3cm}  T \subseteq N^2, \hspace{0.3cm}  V  \subseteq \cup_{i \in N^2} M(i), 
\end{equation}
and hence, 
\[
 |W \cap N^2| \le C - |V| - |T| = n +|W| + k - n  - \frac{k^2+k}{2}  =  |W| - \frac{k^2-k}{2},
\]
which, since $|W| - |W \cap N^2| = |W \cap N^1|$, gives
\begin{equation}\label{T1inN2}
 \frac{k^2-k}{2} \le  |W \cap N^1|.
\end{equation}
Furthermore, by (\ref{f<C}) and (\ref{sum}), 
\[
  | \cup_{i \in N^1} M(i)\cap V| + |W \cap N^1| \le C - |S| - |T_0|
 \]
\[
 = n +|W| + k  - \left (n + |W| - \frac{k^2-k}{2} - 1\right) - 1
 = \frac{k^2-k}{2} + k;
\]
and by adding 
\[
  | \cup_{i \in N^1} M(i)\cap V| + |W \cap N^1|  \le \frac{k^2-k}{2} + k
\]
and (\ref{T1inN2}), 
\begin{equation}\label{S1inN1}
 | \cup_{i \in N^1} M(i)\cap V| \le k.
\end{equation}
On the other hand, (\ref{N(a)}) implies that 
\begin{equation}\label{WN1}
  |W\cap N^1|  \le \frac{| \cup_{i \in N^1} M(i)\cap V|^2 - | \cup_{i \in N^1} M(i)\cap V|}{2} ,
\end{equation}
where the right-hand side of this inequality is the largest number of edges in a graph with $| \cup_{i \in N^1} M(i)\cap V|$ nodes. Taking into account (\ref{T1inN2}),  (\ref{S1inN1}), and (\ref{WN1}), 
\begin{equation}\label{WN1<k}
 |W \cap N^1| =  \frac{k^2-k}{2},
\end{equation}
which, in turn, together with (\ref{S1inN1}), gives $| \cup_{i \in N^1} M(i)\cap V| = k$, because the right-hand side of (\ref{WN1<k})  is the largest number of edges in a graph with $k$ nodes. Hence, the graph induced by the nodes in $ \cup_{i \in N^1} M(i)\cap V$ is a $k$-clique.

Conversely, assume that graph $G(V, W)$ has a $k$-clique. Without loss of generality, let $v_1$, ..., $v_k$ be the nodes in this clique. Consider the partition of $N$ into $N^1$ and $N^2$ where
\begin{equation}\label{N1=T0 and k}
 N^1 = T_0 \cup \{\{v_i, v_j\}: i < j \le k\}. 
\end{equation}
Then, according to (\ref{N(a)}), 
\[
 \cup_{i \in N^1} M(i) = \{v_1, ..., v_k\} \cup S,
\]
and taking into account (\ref{|T|}) and (\ref{|S|}),
\[
 \sum_{i \in N^1}p_i + \sum_{j \in \cup_{i \in N^1} M(i)}p_j = 1 + \frac{k^2-k}{2} + k  
 + n + |W| - \frac{k^2-k}{2} - 1 = C.
\]

Furthermore, (\ref{N1=T0 and k}) implies that 
\begin{equation}\label{N2=T and W}
  N^2 = T \cup (W \setminus \{\{v_i, v_j\}: i < j \le k\}),
\end{equation}
which, in turn, by virtue of (\ref{N(a)}), implies
\[
 \cup_{i \in N^2} M(i) = V.
\]
Consequently, taking into account (\ref{|T|}),
\[
 \sum_{i \in N^2}p_i + \sum_{j \in \cup_{i \in N^2} M(i)}p_j = |T| + |W \setminus \{\{v_i, v_j\}: i < j \le k\}| +|V| 
\]
\[
 = \frac{k^2+k}{2} + |W| - \frac{k^2-k}{2} + n  = C.
\]
Hence, the partition of $N$, defined by  (\ref{N1=T0 and k}) and (\ref{N2=T and W}), satisfies  (\ref{f<C}).
\qed
\end{proof}

If $m$ is part of the input, then the NP-hardness in the strong sense can be established even for very restricted cases of MWPSAS. Two such cases, \mbox{MWPSAS\_M1} and MWPSAS\_N1, are considered below. For any $j \in M$, denote by $N(j)$ the set of all $i \in N$ such that $j \in M(i)$.
\begin{itemize}
\item MWPSAS\_M1 is a particular case of MWPSAS where 
	\begin{itemize}	
		\item $p_i=1$ for all $i \in N\cup M$, 
		\item $|M(i)| = 1$ for all $i \in N$,
		\item $N(i) \cap N(j) = \emptyset$ for each $\{i, j\} \subseteq M$;
	\end{itemize}	
\item MWPSAS\_N1  is a particular case of MWPSAS where 
	\begin{itemize}	
		\item $p_i=1$ for all $i \in N\cup M$, 
		\item $|N(j)| = 1$ for all $j \in M$,
		\item $M(i) \cap M(j) = \emptyset$ for each $\{i, j\} \subseteq N$.
	\end{itemize}	
\end{itemize}

The proofs of the following two theorems are reductions from the 3-partition problem which will be referred to as 3-PARTITION. 3-PARTITION is NP-complete in the strong sense  \cite{garey1979computers}  and can be stated as follows: 

\vspace{0.2cm}
\parbox{12cm}{
3-PARTITION Given $3r$ integers $a_1$, \ldots, $a_{3r}$, each greater than one, and an integer $B$ such that 
\begin{equation}\label{3-PART}
 \sum_{k = 1}^{3r}a_k = r B
\end{equation}
and, for each $k$, 
\begin{equation}\label{3-PARTaB}
 \frac{B}{4} < a_k < \frac{B}{2}. 
\end{equation}
Does there exist a partition of the set $\{1, 2, \ldots, 3r\}$ into $r$ subsets $Z_1$, \ldots, $Z_r$ such that, for each $Z_e$,
\begin{equation}\label{sum a=B}
 \sum_{k \in Z_e} a_k = B?
\end{equation}
}

\noindent Observe that (\ref{3-PARTaB}) implies that if the desired partition exists, the cardinality of each $Z_k$ is equal to three. 

\begin{theorem}\label{MWPSASM1}
MWPSAS\_M1 is NP-hard in the strong sense.
\end{theorem}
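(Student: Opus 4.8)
The plan is to reduce from 3-PARTITION. Given an instance $a_1,\dots,a_{3r}$ and $B$ satisfying (\ref{3-PART}) and (\ref{3-PARTaB}), I would build an instance of MWPSAS\_M1 with $m=r$ as follows. For each item $k\in\{1,\dots,3r\}$ introduce a single ``centre'' $c_k\in M$ together with $a_k$ ``leaves'' in $N$, each leaf associated with the one element $c_k$. Then $M=\{c_1,\dots,c_{3r}\}$, the number of leaves is $|N|=\sum_k a_k=rB$, and the sets $N(c_k)$ (the leaves of $c_k$) are pairwise disjoint, so all defining restrictions of MWPSAS\_M1 hold and $|M(i)|=1$ for every $i\in N$. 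Because all weights equal $1$, the load of a block $N^e$ is $|N^e|+|\cup_{i\in N^e}M(i)|$, i.e. the number of leaves placed in $N^e$ plus the number of distinct centres they hit. Finally set $C=B+3$. The construction has size polynomial in $r$ and $B$, so strong NP-hardness is preserved.

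For the forward direction, suppose $Z_1,\dots,Z_r$ is a valid 3-partition. Place all leaves of the three gadgets indexed by $Z_e$ into block $N^e$. Each block then holds $\sum_{k\in Z_e}a_k=B$ leaves hitting exactly $|Z_e|=3$ centres, so its load is $B+3=C$ and hence $f\le C$.

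The converse is the crux. Summing the block loads and using that the $N^e$ partition $N$ gives $\sum_{e}\bigl(|N^e|+|\cup_{i\in N^e}M(i)|\bigr)=rB+I$, where $I=\sum_e|\cup_{i\in N^e}M(i)|$ is the number of block--centre incidences. Since each $a_k>1$, every centre has a leaf and therefore appears in at least one block, so $I\ge 3r$, with equality exactly when no gadget is split between two blocks. If $f\le C=B+3$, the sum of the $r$ block loads is at most $r(B+3)=rB+3r$, which forces $I\le 3r$ and hence $I=3r$: no gadget is split, every block load equals exactly $B+3$, and each block is a union of whole gadgets. Writing $Z_e$ for the set of item indices whose gadgets lie in $N^e$, this says $\sum_{k\in Z_e}a_k+|Z_e|=B+3$.

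It remains to show $|Z_e|=3$ for every $e$, which is where the bounds (\ref{3-PARTaB}) are used. For a block holding $t=|Z_e|$ whole gadgets the number of leaves in it is $B+3-t$, whereas (\ref{3-PARTaB}) forces this number to lie strictly between $tB/4$ and $tB/2$. When $t\le 2$ we have $B+3-t\ge B+1>tB/2$, and when $t\ge 4$ we have $B+3-t\le B-1<tB/4$; hence $t=3$ is the only possibility, giving $\sum_{k\in Z_e}a_k=B$ as required in (\ref{sum a=B}). Thus $Z_1,\dots,Z_r$ is a valid 3-partition. I expect this final counting-plus-interval argument of the converse to be the main obstacle, while the construction and the forward direction are routine.
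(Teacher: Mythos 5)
Your proof is correct, and it follows the same overall strategy as the paper's: a reduction from 3-PARTITION in which each item $a_k$ becomes a star gadget consisting of one element of $M$ (your centre, the paper's $j_k$) and a disjoint set of unit-weight leaves in $N$, followed by a counting argument showing that any partition meeting the threshold cannot split a gadget between blocks. The difference lies in the normalization, and it has a real consequence for the converse. The paper gives gadget $k$ only $a_k - 1$ leaves, so that the whole gadget contributes total weight exactly $a_k$ and the threshold can be set to $C = B$; the total weight of the instance is then exactly $mB$, and the tight-counting argument immediately yields $\sum_{k \in K_e} a_k = B$ for every block, so the bounds (\ref{3-PARTaB}) are never invoked in the converse. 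Your construction uses $a_k$ leaves and $C = B + 3$, so after your incidence count forces every block to be a union of whole gadgets with load exactly $B+3$, you still must rule out blocks containing $t \neq 3$ gadgets before you can conclude (\ref{sum a=B}); you do this with the interval argument based on (\ref{3-PARTaB}) ($t \le 2$ forces more than $tB/2$ leaves, $t \ge 4$ fewer than $tB/4$), and that step is carried out correctly. Both proofs are valid; the paper's ``$a_k - 1$ leaves'' trick simply buys a cleaner converse by folding the centre's unit weight into the item's weight, eliminating the need for the cardinality argument that your version requires.
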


\begin{proof} 
Consider an instance of 3-PARTITION, i.e. $3r$ integers $a_1$, \ldots, $a_{3r}$, each greater than one, and an integer $B$, satisfying (\ref{3-PART}) and (\ref{3-PARTaB}). The corresponding instance of the decision version of the  MWPSAS\_M1 problem is constructed as follows. Set
\begin{itemize}
 \item $m = r$ and $C = B$; 
 \item $M = \{j_1, ..., j_{3r}\}$ and $N = \cup_{k=1}^{3r}A_k$ where $A_1, ..., A_{3r}$ are  disjoint sets such that $|A_k| = a_k -1$  for each $1 \le k \le 3r$; 
 \item $N(j_k) = A_k$ for each $1 \le k \le 3r$, and $p_i = 1$ for all $i \in N \cup M$.
\end{itemize} 
Observe that for this instance of the decision version of MWPSAS\_M1
\begin{equation}\label{thm2: sumM1}
 \sum_{i \in N \cup M} p_i = \sum_{k=1}^{3r} \sum_{g \in A_k \cup \{j_k\}}p_g = \sum_{k=1}^{3r} a_k = rB = mB. 
\end{equation} 
 
Suppose that there exists a partition of $N$ into $m$ non-empty subsets $N^1$, ..., $N^m$ such that  
\begin{equation}\label{f<C=B}
 f(N^1, ..., N^m) \le C = B.
\end{equation}
Then, taking into account (\ref{f()}), for each $1 \le e \le m$,
\[
 B \ge \sum_{i \in N^e}p_i + \sum_{j \in \cup_{i \in N^e} M(i)}p_j,
\]
which together with (\ref{thm2: sumM1}) gives
\[
 mB \ge \sum_{e=1}^m \left(\sum_{i \in N^e}p_i + \sum_{j \in \cup_{i \in N^e} M(i)}p_j\right)  \ge \sum_{i \in N \cup M} p_i = mB,
\] 
and consequently,
\begin{equation}\label{thm2: sum()=sum pi}
 \sum_{e=1}^m \left(\sum_{i \in N^e}p_i + \sum_{j \in \cup_{i \in N^e} M(i)}p_j\right)  = \sum_{i \in N \cup M} p_i
\end{equation}
and
\begin{equation}\label{thm2: ()=B}
 \sum_{i \in N^e}p_i + \sum_{j \in \cup_{i \in N^e} M(i)}p_j = B \hspace{0.5cm}\mbox{ for each } 1\le e \le m.
\end{equation}

For each $1 \le k \le 3m$, there exists a unique $c$ such that $j_k \in \cup_{i \in N^c}M(i)$, because otherwise $p_{j_k}$ will appear in the left-hand side of (\ref{thm2: sum()=sum pi}) more than once, which contradicts this equality. This implies that if index $e$ is not equal to this $c$, then $A_k \cap N^e = \emptyset$, and consequently $A_k \subseteq N^c$. For each $1 \le e \le m$, denote by $K_e$ the set of all $k$ such that $j_k \in \cup_{i \in N^e}M(i)$. Then, taking into account (\ref{thm2: ()=B}),
\[
 B = \sum_{i \in N^e}p_i + \sum_{j \in \cup_{i \in N^e} M(i)}p_j = \sum_{k \in K_e} \sum_{g \in A_k \cup \{j_k\}}p_g = \sum_{k \in K_e}a_k.
\]
So, $Z_1 = K_1$, ..., $Z_r = K_r$ is a required partition of the set $\{1, 2, ..., 3r\}$.
 
Conversely, let $Z_1$, ..., $Z_r$ be a partition of the set $\{1, 2, \ldots, 3r\}$ such that each $Z_e$ satisfies (\ref{sum a=B}). Then, taking into account that $m=r$,  
\[
 N^1 = \cup_{k \in Z_1}A_k, \hspace{0.5cm}..., \hspace{0.5cm}N^m = \cup_{k \in Z_r}A_k
\]
is a partition of $N$ which satisfies (\ref{f<C=B}) because, for each $1 \le e \le m$,
\[
 \sum_{i \in N^e}p_i + \sum_{j \in \cup_{i \in N^e} M(i)}p_j = 
 \sum_{k \in Z_e}|A_k| + |Z_e| = 
\]
\[
 \sum_{k \in Z_e}(a_k - 1) + |Z_e| = 
 \sum_{k \in Z_e}a_k = B. 
\]
\qed 
\end{proof}

\begin{theorem}\label{MWPSASN1}
MWPSAS\_N1 is NP-hard in the strong sense.
\end{theorem}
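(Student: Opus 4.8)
The plan is to reduce from 3-PARTITION, dualising the construction used for Theorem~\ref{MWPSASM1}: there each node of $N$ owned a single element of $M$ while several nodes could share it, whereas here I will arrange for each element of $M$ to have a single predecessor and for a node of $N$ to own an entire block of $M$. Given an instance of 3-PARTITION, i.e. integers $a_1, \ldots, a_{3r}$ and $B$ satisfying (\ref{3-PART}) and (\ref{3-PARTaB}), I would set $m = r$, $C = B$, take $N = \{i_1, \ldots, i_{3r}\}$ with one node per integer, and associate with each $i_k$ a private subset $M(i_k)$ consisting of $a_k - 1$ fresh elements, so that $M = \cup_{k=1}^{3r} M(i_k)$ is a disjoint union; all weights are set to $1$.

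The first step is to verify that this instance indeed belongs to MWPSAS\_N1. By construction the sets $M(i_k)$ are pairwise disjoint, so $M(i)\cap M(j)=\emptyset$ for all $\{i,j\}\subseteq N$, and every $j\in M$ lies in exactly one block, so $|N(j)|=1$. Since each $a_k>1$ gives $|M(i_k)| = a_k-1 \ge 1$, every block is non-empty and $\cup_{i\in N}M(i)=M$ holds; finally $m=r<3r=|N|$.

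The key observation, which renders the remaining argument routine, is that disjointness of the blocks together with unit weights collapses the objective. For any part $N^e$,
\[
 \sum_{i \in N^e}p_i + \sum_{j \in \cup_{i \in N^e} M(i)}p_j = \sum_{k:\, i_k\in N^e}\bigl(1+|M(i_k)|\bigr) = \sum_{k:\, i_k\in N^e} a_k,
\]
so $f(N^1,\ldots,N^m)$ is exactly the makespan obtained by assigning a job of length $a_k$ to machine $e$ whenever $i_k\in N^e$.

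Given this, the two directions are immediate. For the forward direction, a valid 3-partition $Z_1,\ldots,Z_r$ yields the partition $N^e=\{i_k: k\in Z_e\}$, each part contributing exactly $\sum_{k\in Z_e}a_k=B=C$, so the partition is feasible. For the converse, assuming $f(N^1,\ldots,N^m)\le B$, summing the $m$ per-part totals recovers $\sum_{k=1}^{3r}a_k=rB=mB$; since each of the $m$ totals is at most $B$ yet they sum to $mB$, every total equals exactly $B$, and the sets $Z_e=\{k: i_k\in N^e\}$ form the required 3-partition. I do not expect a genuine obstacle here: the only points requiring care are confirming the two structural constraints of MWPSAS\_N1 and checking that the objective decouples as above, after which the counting argument parallels that of Theorem~\ref{MWPSASM1}.
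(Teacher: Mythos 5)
Your proposal is correct and takes essentially the same route as the paper: the identical reduction from 3-PARTITION ($m=r$, $C=B$, one node $i_k$ per integer with a private block of $a_k-1$ unit-weight elements of $M$), followed by the same counting argument that forces every part's total to equal $B$. The only cosmetic difference is that you state explicitly that disjointness of the blocks makes the objective decouple into $\sum_{k:\,i_k\in N^e}a_k$, a fact the paper uses implicitly when it sums the per-part totals and invokes the argument of Theorem~\ref{MWPSASM1}.
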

\begin{proof}
Consider an instance of 3-PARTITION, i.e. $3r$ integers $a_1$, \ldots, $a_{3r}$, each greater than one, and an integer $B$, satisfying (\ref{3-PART}) and (\ref{3-PARTaB}). The corresponding instance of the decision version of MWPSAS\_N1 is constructed as follows. Set
\begin{itemize}
 \item $m = r$ and $C = B$; 
 \item $N = \{i_1, ..., i_{3r}\}$ and $M = \cup_{k=1}^{3r}A_k$ where $A_1, ..., A_{3r}$ are  disjoint sets such that $|A_k| = a_k -1$ for each $1 \le k \le 3r$;
 \item $M(i_k) = A_k$ for each $1 \le k \le 3r$, and $p_i = 1$ for all $i \in N \cup M$.
\end{itemize}

Suppose that there exists a partition of $N$ into $m$ non-empty subsets $N^1$, ..., $N^m$ such that  
\begin{equation}\label{thm3: f<C=B}
 f(N^1, ..., N^m) \le C = B.
\end{equation}
For each $1 \le e \le m$, denote by $K_e$ the set of all $k$ such that $i_k \in N^e$.
Then, taking into account (\ref{f()}) and (\ref{thm3: f<C=B}),
\[
 mB \ge \sum_{e=1}^m \left(\sum_{i \in N^e}p_i + \sum_{j \in \cup_{i \in N^e} M(i)}p_j\right)  \ge \sum_{i \in N \cup M} p_i = \sum_{k=1}^{3r} \sum_{g \in A_k \cup \{i_k\}}p_g 
\]
\[= \sum_{k=1}^{3r} a_k = rB = mB,
\] 
and consequently, similarly to the proof of Theorem \ref{MWPSASM1}, for each $1 \le e \le m$,
\[
  B = \sum_{i \in N^e}p_i + \sum_{j \in \cup_{i \in N^e} M(i)}p_j = \sum_{k \in K_e} \sum_{g \in A_k \cup \{i_k\}}p_g = \sum_{k \in K_e}a_k.
\]
Hence, $Z_1 = K_1$, ..., $Z_r = K_r$ is a required partition of the set $\{1, 2, ..., 3r\}$.

Conversely, let $Z_1$, ..., $Z_r$ be a partition of the set $\{1, 2, \ldots, 3r\}$ such that each $Z_e$ satisfies (\ref{sum a=B}). Then, similarly to the proof of Theorem \ref{MWPSASM1}, taking into account (\ref{sum a=B}) and that $m = r$,  
\[
 N^1 = \cup_{k  \in Z_1}\{i_k\}, \hspace{0.5cm}..., \hspace{0.5cm}N^m = \cup_{k \in Z_r}\{i_k\}
\]
is a partition of $N$ which satisfies (\ref{thm3: f<C=B}). 
\qed
\end{proof}

\section{Approximation algorithms}
MWPSAS\_N1 can be viewed as a scheduling problem concerned with processing a finite set of jobs $J$ on $m > 1$ identical machines $M_1$, ..., $M_m$. In this scheduling problem, each $j \in J$ has an integer processing time $d_j$ and can be processed by any of the machines. A machine can process at most one job at a time. The goal is to minimise the makespan -- the time required for the completion of all jobs. Indeed, any instance of MWPSAS\_N1 can be viewed as an instance of the above scheduling problem with $J = N$ and $d_j = |M(j)| + 1$, for all $j \in J$. For this instance of the scheduling problem, the sets in a partition $N^1$, ..., $N^m$ in MWPSAS\_N1 specify what jobs should be assigned to  machines $M_1$, ..., $M_m$, respectively, and $f(N^1, ..., N^m)$ is the corresponding makespan. Therefore, any approximation algorithm for the above makespan minimisation problem is also applicable to MWPSAS\_N1.       

Consider the general case of MWPSAS. For any partition $R^1$, ..., $R^{r}$ of $N$, Approximation Algorithm below constructs a partition of $N$ into $m$ subsets $P^1$, ..., $P^m$. For this partition, Theorem \ref{A} establishes an upper bound on the deviation of $f(P^1, ..., P^m)$ from the optimal value of this function in terms of the partition $R^1$, ..., $R^{r}$. This leads to a two-phase optimisation procedure where the first phase is a choice of a partition $R^1$, ..., $R^{r}$ and the second phase is the conversion of $R^1$, ..., $R^{r}$ into $P^1$, ..., $P^m$. If $r = 1$ and consequently $R^1 = N$, i.e. no partition has been actually made, the algorithm below is applied to the original problem directly. Other two extreme choices of the initial partition are $R^1$, ..., $R^{r}$ where each $|R^k| = 1$ and consequently $r = |N|$, and $R^1$, ..., $R^{|M|}$ for MWPSAS\_M1 where each $R^k$ is $N(j)$ for some $j \in M$. It will be shown that, unless $P = NP$, the latter choice of the initial partition for MWPSAS\_M1 leads to the best possible polynomial-time  algorithm in terms of the guaranteed upper bound on the deviation from the optimal value of (\ref{f()}).    

Approximation Algorithm constructs the required partition using the value 
\begin{eqnarray*}
 D = \left\lceil \displaystyle  \frac{1}{m}\left(\sum_{j \in N}p_j + \sum_{u=1}^{r} \sum_{g \in \cup_{j\in R^u}M(j)}p_g \right)\right\rceil  \\ 
 + \max_{1 \le u \le r}\left\{
 \max_{j \in R^u}p_j + \sum_{g \in \cup_{j\in R^u}M(j)}p_g 
 \right\}-1
\end{eqnarray*}
which is computed based on the input partition $R^1$, ..., $R^{r}$.

\begin{algorithm}[H]
\renewcommand{\thealgorithm}{}
\floatname{algorithm}{}
\caption{Approximation Algorithm}
\begin{algorithmic}[1]
\label{algo-approx}
\STATE $P^1 = \emptyset$, $k = 1$, $e = 1$, $H^1 = R^1$, ..., $H^r = R^r$
\WHILE {$k \leq r$} \label{while k start}
 \STATE $H = H^k$
 \WHILE {$H \neq \emptyset$} \label{while H start}
  \STATE choose $i \in H$ \label{choose i}
  \IF {$\displaystyle \sum_{j \in P^e \cup \{i\}}p_j + \sum_{j \in \cup_{g\in P^e\cup \{i\}}M(g)}p_j \le D$} \label{if in H}
 	\STATE $P^e = P^e \cup \{i\}$
	\STATE $H^k = H^k\setminus \{i\}$
  \ENDIF
  \STATE $H = H\setminus \{i\}$
 \ENDWHILE \label{while H end}
 \IF {$H^k \neq \emptyset$} \label{if Hk}
  \STATE $e = e + 1$ \label{e=e+1}
  \STATE $P^e = \emptyset$
 \ELSE 
  \STATE $k = k+1$ \label{k+1}
 \ENDIF
\ENDWHILE \label{while k end}
\WHILE {$e < m$} \label{second while start}
\STATE $k = 1$
\WHILE {$|P^k| = 1$} 
\STATE $k = k + 1$ 
\ENDWHILE
\STATE choose $i \in P^k$
\STATE $P^k = P^k \setminus \{i\}$
\STATE $e = e + 1$
\STATE $P^e = \{i\}$
\ENDWHILE \label{second while end}
\RETURN $P^1$, ..., $P^m$ 
\end{algorithmic}
\end{algorithm}

\begin{theorem}\label{A}
Approximation Algorithm \ref{algo-approx} constructs a partition of $N$ into $m$ subsets, and for this partition $P^1$, ..., $P^m$, 
\[
 f(P^1, ..., P^m) - f^* 
\]
\[
 \le \left\lceil \displaystyle  \frac{1}{m}\left(\sum_{j \in N}p_j + \sum_{u=1}^{r} \sum_{g \in \cup_{j\in R^u}M(j)}p_g \right)\right\rceil + \max_{1 \le u \le r}\left\{
 \max_{j \in R^u}p_j + \sum_{g \in \cup_{j\in R^u}M(j)}p_g 
 \right\}
\]
\[
  - \max \left \{\left \lceil \frac{1}{m} \sum_{i \in N\cup M} p_i\right \rceil,  \; 
 \max_{i \in N}\left(p_i + \sum_{j \in M(i)}p_j \right)\right\} - 1
\]

where $f^*$ is the optimal value of (\ref{f()}).
\end{theorem}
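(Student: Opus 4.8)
\textit{Plan of proof.} The plan is to reduce the deviation bound to a single inequality, namely that every bin $P^e$ of the produced partition has weight at most $D$, where I write $A=\sum_{j\in N}p_j+\sum_{u=1}^r\sum_{g\in\cup_{j\in R^u}M(j)}p_g$ and $B=\max_{1\le u\le r}\{\max_{j\in R^u}p_j+\sum_{g\in\cup_{j\in R^u}M(j)}p_g\}$, so that $D=\lceil A/m\rceil+B-1$. The quantity subtracted on the right-hand side is exactly the standard lower bound $\Lambda=\max\{\lceil\frac1m\sum_{i\in N\cup M}p_i\rceil,\ \max_{i\in N}(p_i+\sum_{j\in M(i)}p_j)\}$ on $f^{*}$: the first term follows by averaging, since in any $m$-partition the sum of the $m$ bin weights is at least $\sum_{i\in N}p_i+\sum_{j\in M}p_j=\sum_{i\in N\cup M}p_i$ (because $\cup_{i\in N}M(i)=M$, each $j\in M$ is counted in at least one bin), and the second term holds because the bin containing $i$ has weight at least $p_i+\sum_{j\in M(i)}p_j$. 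Since the right-hand side of the theorem equals $D-\Lambda$ and $f^{*}\ge\Lambda$, once I prove that the algorithm returns a genuine $m$-partition with $f(P^1,\dots,P^m)\le D$ I immediately get $f(P^1,\dots,P^m)-f^{*}\le D-\Lambda$, as required.

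Next I would verify feasibility and the bound $f\le D$. The acceptance test in the inner loop guarantees that throughout the first phase every $P^e$ has weight at most $D$. The decisive elementary fact is that $D\ge B\ge p_i+\sum_{j\in M(i)}p_j$ for every $i\in N$ (the ceiling term is at least $1$): hence an empty bin always accepts the first element offered to it, so the first phase places every element of $N$, never leaves an opened bin empty, and terminates. In the second phase each newly created singleton $\{i\}$ has weight $p_i+\sum_{j\in M(i)}p_j\le D$, and deleting an element from a bin cannot increase that bin's weight, so the invariant ``every bin has weight $\le D$'' is preserved; the second phase keeps splitting a bin of size at least two and stops with exactly $m$ non-empty bins, such a bin existing whenever fewer than $m$ bins are non-empty because $|N|>m$. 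Thus, provided the first phase opens at most $m$ bins, the output is a valid $m$-partition with $f(P^1,\dots,P^m)\le D$.

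It therefore remains --- and this is the heart of the argument --- to show that the first phase opens at most $m$ bins; call their number $e^{*}$. Here I would establish two facts. First, every bin except the last is \emph{full}: when a bin $P^e$ is closed, some element $i$ of the block then being processed failed the test, and its marginal weight $p_i+\sum_{g\in M(i)\setminus\cup_{h\in P^e}M(h)}p_g$ is at most $B$ (it lies within a single block $R^u$), so the weight of $P^e$ exceeds $D-B=\lceil A/m\rceil-1$, i.e.\ is at least $\lceil A/m\rceil$. Second, I would use a de-duplicated volume count: crediting the associated-set weight $\sum_{g\in\cup_{j\in R^u}M(j)}p_g$ of each block $R^u$ to a single one of the bins its elements occupy gives a per-bin ``de-duplicated weight'' whose total over all bins is exactly $A$. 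Granting that each full bin still carries de-duplicated weight at least $\lceil A/m\rceil$, and that the last bin carries at least $1$, I get $(e^{*}-1)\lceil A/m\rceil\le A-1<A\le m\lceil A/m\rceil$, whence $e^{*}\le m$.

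The main obstacle is the duplication hidden in the second fact: because $\cup_{i\in P^e}M(i)$ is re-counted in every bin across which a block is split, the raw total of bin weights can exceed $A$, so the naive ``total volume $\le A$'' is false and some care is needed. The role of the term $B$ in $D$ is precisely to absorb this: it pads each bin enough that, after each block's associated set is charged to just one of its bins (for instance the first bin of the block that covers a given element), a bin closed during the processing of block $R^u$ retains de-duplicated weight at least $\lceil A/m\rceil$, the re-covered weight being controlled by $B$ through the marginal-weight estimate above. The one delicate point, which I expect to require the most attention, is a bin shared by two consecutive blocks, where one must check that the weight re-covered by the earlier block is still absorbed by $B$; establishing this boundary estimate cleanly is where the real work of the proof lies.
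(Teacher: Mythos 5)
Your overall strategy mirrors the paper's proof: reduce the theorem to showing that the algorithm returns a genuine $m$-partition in which every bin has weight at most $D$, note that the subtracted expression is the standard lower bound on $f^*$, and bound the number of bins opened in the first phase by a de-duplicated volume count in which each block $R^u$'s associated-set weight is credited to exactly one bin. However, there is a genuine gap, and it sits exactly at the step you yourself flag as ``where the real work of the proof lies'': you never prove that each closed bin carries de-duplicated weight at least $\lceil A/m\rceil$; you only ``grant'' it. Worse, the concrete charging rule you suggest --- credit each block's associated-set weight to the \emph{first} bin its elements occupy --- does not make the boundary estimate go through. Consider a bin $P^e$ that receives the tail of a block $R^{u_0}$ begun in $P^{e-1}$, then several complete blocks, and is finally closed by the failure of an element $i$ of a later block $R^{u'}$. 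Under first-bin charging, $R^{u_0}$ is charged to $P^{e-1}$, so the associated-set weight that $R^{u_0}$'s tail contributes to $P^e$ is uncharged there, and so is the marginal weight of $i$; these two quantities come from \emph{different} blocks, so together they are bounded only by $2B$, not by $B$, and the desired inequality $\mathrm{dedup}(P^e)\ge D+1-B=\lceil A/m\rceil$ does not follow.

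The paper resolves this by charging each block to the bin in which it is \emph{completed}, not begun: its set $K_e$ is precisely the set of blocks finished inside $P^e$, defined via a case distinction according to whether the closure of $P^e$ is triggered by the current block $H^{k_e}$ or by the next block $H^{k_e+1}$. With that orientation, the only uncharged associated-set weight sitting in a closed bin belongs to the block currently being processed --- the very block the failed element $i$ comes from --- so the failed element's weight and all re-counted weight lie inside a single block $R^u$ and are jointly bounded by $\max_{j\in R^u}p_j+\sum_{g\in\cup_{j\in R^u}M(j)}p_g\le B$. This single-block absorption is the idea your proposal is missing; once you have it (together with the observation that the sets $K_e$ are pairwise disjoint, since a block completes in exactly one bin), your counting $(e^*-1)\lceil A/m\rceil\le A-1<m\lceil A/m\rceil$, or equivalently the paper's summation of its per-bin inequality over the first $m$ bins, closes the argument. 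The remaining parts of your proposal (the lower bound on $f^*$, the feasibility of the second phase, and the fact that an empty bin always accepts its first element because $D\ge B$) are correct and coincide with the paper's reasoning.
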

\begin{proof}
Observe that if $P^e = \emptyset$, then the condition in line \ref{if in H} is satisfied which leads to the expansion of $P^e$ by $i$, chosen in line \ref{choose i}. Each time when $P^e$ is expanded by $i$, chosen in line \ref{choose i}, this $i$ is eliminated from $H^k$ and from $H$. Observe also that if $H^k \neq \emptyset$ after the execution of the while loop \ref{while H start} -- \ref{while H end}, then the next iteration of the while loop \ref{while k start} -- \ref{while k end} commences with this $H^k$ and $P^e = \emptyset$. Furthermore, index $k$ is increased according to line \ref{k+1}, and consequently $H^k$ is eliminated from further consideration, only when $H^k = \emptyset$. Hence, the while loop \ref{while k start} -- \ref{while k end} terminates after a final number of iterations with some partition $P^1$, ..., $P^{e'}$ of $N$.
 
In order to prove that $e' \le m$, assume to the contrary that $e' > m$. For each $1 \le e < e'$, denote by $k_e$ the largest among all $k$ such that $P^e\cap R^k \neq \emptyset$ and denote by $K_e$ the set defined as follows. If the increase of $e$ in line \ref{e=e+1} (and consequently the termination of the construction of $P^e$) is triggered in line \ref{if Hk} by set $H^{k_e + 1}$, then $K_e$ is the set of all $k$ such that $P^e\cap R^k \neq \emptyset$. Otherwise, if the increase of $e$ in line \ref{e=e+1} is triggered in line \ref{if Hk} by set $H^{k_e}$, then $K_e$ is the set of all $k$ such that $P^e\cap R^k \neq \emptyset$ and $k < k_e$. 

In the former case, i.e. in the case when $k_e \in K_e$, the while loop \ref{while H start} -- \ref{while H end}
failed to add to $P^e$ an element of $H^{k_e + 1}$. So, for any $i \in H^{k_e + 1}$,  
\[
 \sum_{j \in P^e}p_j + \sum_{k \in K_e}\sum_{j \in \cup_{g\in R^k}M(g)}p_j + p_i + \sum_{j \in M(i)}p_j \ge \sum_{j \in P^e \cup \{i\}}p_j 
 + \sum_{j \in \cup_{g\in P^e  \cup \{i\}}M(g)}p_j \ge D + 1
\]
and consequently 
\begin{equation}\label{thm: >D}
 \sum_{j \in P^e}p_j + \sum_{k \in K_e}\sum_{j \in \cup_{g\in R^k}M(g)}p_j \ge \frac{1}{m}\left(\sum_{j \in N}p_j + \sum_{u=1}^{r} \sum_{g \in \cup_{j\in R^u}M(j)}p_g \right).
\end{equation}
In the latter case, i.e. in the case when $k_e \notin K_e$, the while loop \ref{while H start} -- \ref{while H end} failed to add to $P^e$ the element $i \in R^{k_e}$, chosen as a result of the last execution of line \ref{choose i}. Hence,
\[
 \sum_{j \in P^e}p_j + \sum_{k \in K_e}\sum_{j \in \cup_{g\in R^k}M(g)}p_j + p_i + \sum_{j \in \cup_{g\in R^{k_e}}M(g)}p_j \ge \sum_{j \in P^e \cup \{i\}}p_j 
 + \sum_{j \in \cup_{g\in P^e  \cup \{i\}}M(g)}p_j 
\]
\[ 
 \ge D + 1,
\]
which again leads to (\ref{thm: >D}). Taking into account that, for any $1 \le u \le m$ and $1 \le v \le m$ such that $u \neq v$, $P^u \cap P^v = \emptyset$ and $K_u \cap K_v = \emptyset$, by adding (\ref{thm: >D}) for all $1 \le e \le m$,  
\[
 \sum_{e = 1}^m\sum_{j \in P^e}p_j \ge \sum_{j \in N}p_j + \sum_{u=1}^{r} \sum_{g \in \cup_{j\in R^u}M(j)}p_g - \sum_{e = 1}^m \sum_{k \in K_e}\sum_{j \in \cup_{g\in R^k}M(g)}p_j \ge \sum_{j \in N}p_j,
\]
which contradicts the assumption that $P^{m+1} \neq \emptyset$.

If $e' < m$, then the partition $P^1$, ..., $P^{e'}$, constructed by the while loop \ref{while H start} -- \ref{while H end}, becomes the initial current partition for the while loop \ref{second while start} -- \ref{second while end}. At each iteration,  the while loop \ref{second while start} -- \ref{second while end} constructs a new current partition of $N$ by finding a set in the current partition, containing more than one element (such a set exists by virtue of $m < |N|$), removing one element from this set, and introducing a new set that is comprised of only this one element.  

Taking into account that 
\[
 f^* \ge \max \left \{\left \lceil \frac{1}{m} \sum_{i \in N\cup M} p_i\right \rceil,  \; 
 \max_{i \in N}\left(p_i + \sum_{j \in M(i)}p_j \right)\right\} 
\]
and that $f(P^1, ..., P^m) \le D$,
\[
 f(P^1, ..., P^m) - f^* \le D - \max \left \{\left \lceil \frac{1}{m} \sum_{i \in N\cup M} p_i\right \rceil,  \; 
 \max_{i \in N}\left(p_i + \sum_{j \in M(i)}p_j \right)\right\}
\]
\[
 = \left\lceil \displaystyle  \frac{1}{m}\left(\sum_{j \in N}p_j + \sum_{u=1}^{r} \sum_{g \in \cup_{j\in R^u}M(j)}p_g \right)\right\rceil + \max_{1 \le u \le r}\left\{
 \max_{j \in R^u}p_j + \sum_{g \in \cup_{j\in R^u}M(j)}p_g 
 \right\}-1 
\]
\[
  - \max \left \{\left \lceil \frac{1}{m} \sum_{i \in N\cup M} p_i\right \rceil,  \; 
 \max_{i \in N}\left(p_i + \sum_{j \in M(i)}p_j \right)\right\}
\]
which completes the proof.
\qed
\end{proof}

\begin{corollary}
If, for MWPSAS\_M1, $r = |M|$ and each $R^u$ is $N(j)$ for some $j \in M$, then 
\[
 f(P^1, ..., P^m) - f^* \le 1
\] 
where $f^*$ is the optimal value of (\ref{f()}).
\end{corollary}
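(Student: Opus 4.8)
The plan is to apply Theorem~\ref{A} directly to the specified initial partition and to collapse each term of its bound using the defining features of MWPSAS\_M1. First I would check that the collection $R^1, \ldots, R^{|M|}$ with each $R^u = N(j)$ really is a partition of $N$: the sets are pairwise disjoint by the hypothesis $N(i) \cap N(j) = \emptyset$ for $\{i,j\} \subseteq M$, and they cover $N$ because $\cup_{i \in N} M(i) = M$ forces every $j \in M$ to have $N(j) \neq \emptyset$, while $|M(i)| = 1$ places each $i \in N$ in exactly one $N(j)$. This justifies feeding this partition into the algorithm.

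The central computation is the evaluation of $\cup_{j \in R^u} M(j)$. Fixing $u$ with $R^u = N(j_u)$, every $i \in N(j_u)$ satisfies $j_u \in M(i)$, and since $|M(i)| = 1$ in fact $M(i) = \{j_u\}$; hence $\cup_{j \in R^u} M(j) = \{j_u\}$ is a singleton. Combined with $p_g = 1$ for all $g$, this gives $\sum_{g \in \cup_{j \in R^u} M(j)} p_g = 1$ for each $u$, so the double sum $\sum_{u=1}^{|M|} \sum_{g \in \cup_{j \in R^u} M(j)} p_g$ collapses to $|M|$, and the first bracketed quantity in the bound of Theorem~\ref{A} becomes $\sum_{j \in N} p_j + |M| = |N| + |M| = \sum_{i \in N \cup M} p_i$.

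With this in hand the three terms simplify. The leading ceiling equals $\lceil \frac{1}{m} \sum_{i \in N \cup M} p_i \rceil$; the maximum $\max_{1 \le u \le |M|}\{ \max_{j \in R^u} p_j + \sum_{g \in \cup_{j \in R^u} M(j)} p_g\}$ equals $1 + 1 = 2$; and, using $\max_{i \in N}(p_i + \sum_{j \in M(i)} p_j) = 1 + |M(i)| = 2$, the subtracted maximum equals $\max\{ \lceil \frac{1}{m} \sum_{i \in N \cup M} p_i \rceil,\, 2\}$. Writing $Q = \lceil \frac{1}{m}(|N|+|M|)\rceil$, the whole bound thus reduces to $Q + 2 - \max\{Q, 2\} - 1 = Q + 1 - \max\{Q, 2\}$.

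The only remaining step is a short case distinction on $Q$, which I expect to be the mildest of obstacles rather than a genuine difficulty. If $Q \ge 2$ then $\max\{Q,2\} = Q$ and the bound is exactly $1$; if $Q = 1$ (the only other possibility, since $|N| + |M| \ge 1$ forces $Q \ge 1$) then $\max\{Q,2\} = 2$ and the bound is $0$. In either case $f(P^1, \ldots, P^m) - f^* \le 1$, which is the claim.
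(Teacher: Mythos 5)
Your proposal is correct and follows exactly the intended route: the paper states this corollary without proof as an immediate consequence of Theorem~\ref{A}, and your derivation (verifying that the sets $N(j)$ form a valid input partition, collapsing each $\cup_{j\in R^u}M(j)$ to a singleton so the bound becomes $Q+2-\max\{Q,2\}-1$ with $Q=\lceil(|N|+|M|)/m\rceil$, then the two-case check) is precisely the computation the authors leave to the reader. Note that since $m<|N|$ one in fact always has $Q\ge 2$, so the $Q=1$ case is vacuous, but including it does no harm.
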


\section{Conclusions}
This paper considers a combinatorial optimisation problem, inspired by applications in software testing and manufacturing, and proves that even very restricted particular cases of this problem are NP-hard in the strong sense. These computational complexity results are complemented by a method of designing approximation algorithms and an upper bound on the deviation from the optimal value of the objective function which is valid for all algorithms constructed by this method. The results presented in the paper also contribute to the realm of scheduling with communication delay, since the studied combinatorial optimisation problem can be viewed as a parallel machine scheduling problem with the infinitely large communication delay and the partial order on the set of jobs in the form of a bipartite graph. Given the computational complexity of the considered problem, a logical continuation of the presented research would be the design and analysis of various approximation algorithms, including but not limited to the algorithms based on the method presented in the paper, as well as computational experiments with different exact algorithms.

\bibliographystyle{splncs04}
\bibliography{CommunicationDelay}

\end{document}